\newtheorem{proposition}{Proposition}
\begin{document}

\title{Maximizing Mobile Coverage via Optimal
Deployment of Base Station and Relays
}

\author{Xu Li,  Dongning Guo, John Grosspietsch, Huarui Yin, Guo Wei
\thanks{This work was supported in part by Motorola Center for Seamless Communications at Northwestern University.}
\thanks{X. Li is with Department of Electrical Engineering and Information Science,
University of Science and Technology of China,
Hefei, Anhui 230027, China and he is a visiting Ph.D. student at Northwestern University from September 2013 to August 2014 supported by China Scholarship Council. Email: lixu@mail.ustc.edu.cn.  D. Guo is with Department of Electrical Engineering and Computer Science, Northwestern University, Evanston, IL 60201, USA. Email: dGuo@northwestern.edu. J. Grosspietsch was with Enterprise Mobility Solutions Research at
Motorola Solutions, Inc., Schaumburg, IL 60196, USA. Email: john.k.grosspietsch@ieee.org. H. Yin and G. Wei are with
Department of Electrical Engineering and Information Science,
University of Science and Technology of China,
Hefei, Anhui 230027, China.
Email: yhr@ustc.edu.cn, wei@ustc.edu.cn.}}

\maketitle

\begin{abstract}
Deploying relays and/or mobile base stations is a major means of extending the
coverage of a wireless network.
This paper presents models, analytical results, and algorithms to
answer two related questions:
The first is where to deploy relays  in order to extend the reach
from a base station to the maximum;
the second is where to deploy a mobile base station and how many relays
are needed to reach any point in a given area.
Simple time-division and frequency-division scheduling schemes as well
as an end-to-end data rate requirement are assumed.
An important use case of the results is in the Public Safety Broadband
Network, in which deploying relays and mobile base stations is often crucial to provide coverage to an incident scene.
\end{abstract}

\section{Introduction}
Even with today's seemingly ubiquitous wireless access, many areas and corners are not fully covered by existing networks.  There is often no cellular connection in the basement level of large buildings and in remote unpopulated areas.  The existing cellular infrastructure may also be knocked out of service for periods of times in areas hit by disasters \cite{cite_1,cite_1_2,cite_1_3,cite_1_4,cite_1_5}.  Relays can be used to extend the wireless coverage of, e.g., a cellular network \cite{00new,00}.  To extend the coverage of the Public Safety Broadband Network\footnote{The Public Safety Broadband Network, conceived to be a single, connected, universal network for all public safety purposes, is currently being planned and tested in the U.S.} with manageable cost, it has been proposed that mobile base stations are sent to incident scenes along with public safety personnel \cite{ref62}. An important question to consider is how to optimally deploy relays and mobile base stations.

The base station, relays and the destination form a multi-hop device-to-device (D2D) network to extend the wireless coverage\cite{datarate1,datarate3,datarate4,multi-hop}. Most literatures have studied the conditions where the communication range of all links are independent. When each guard can monitor unlimited range but has no vision through the wall, reference \cite{X3} studied  the deployment of the guards  to cover an art gallery with $n$ walls, i.e.,  an $n$-vertex polygon  which is nonconvex in general. When each sensor can monitor an arbitrary limited range, the optimal deployment patterns for full coverage and $k$-connectivity were studied in \cite{new1,new2, new3}.  The numerical deployment algorithms with a minimum number of sensors to provide full coverage were  discussed in \cite{algcover,new4,new5, new6, new7, new8}.  When the transmit range of each device is limited by  the allocated energy, average traffic data and lifetime requirement, the optimal positions of the base station  and the relays for maximizing the system lifetime were considered in \cite{lifetime_1,lifetime_2,lifetime_3}.

In a practical multi-hop D2D network, the communication range of all links are interrelated and mutually determined by the resources allocation scheme and the  quality of service (QoS) requirement. With fixed distance between the base station and the destination, the optimal positions of the relays minimize the end-to-end outage probability in \cite{fixdistance_outage_1,fixdistance_outage_2,fixdistance_outage_3}, or maximize the end-to-end data rate in \cite{fixdistance_datarate_1}. 
Few papers  have studied the optimal positions of the relays for maximizing the distance between the base station and the destination subject to a QoS requirement. Reference \cite{flexdistance_outage_1} studied the case with a single relay and 
an outage probability requirement. This paper studies the general case with multiple relays and an end-to-end data rate requirement. The resources are shared using either time-division or frequency-division scheduling scheme, and both the deployment of relays and a mobile base station are discussed.

The remainder of this paper is organized as follows. Section \ref{section3}  introduces the network models. Section \ref{sectiondatarate}
 introduces the channel model.

Section \ref{section4} studies the deployment of  relays. Specifically, the relays are deployed between the  base station and the destination. We analytically determine the optimal positions of the relays, so that the reach to the destination is maximized subject to the end-to-end data rate  requirement.  When the number of  relays is small, the maximum reach increases with the number of  relays. But beyond a certain number, deploying more relays does not provide further improvement (in fact, it decreases the maximum reach if the same QoS needs to be maintained).

Section \ref{section5} studies the deployment of a mobile base station. Specifically, the mobile base station is deployed to cover an arbitrary polygon,  which may or may not be convex. Both the general case where the base station can be located anywhere and the situation where the base station is constrained to be outside or on the boundary of the polygon are considered.  We propose efficient algorithms to compute the optimal position of the  mobile base station, so that the minimum signal-to-noise ratio (SNR) of any point over the entire region is maximized. If the polygon  can not be covered by the mobile base station alone, relays are deployed to extend the  coverage. The goal here is limited to reaching any point in the region, rather than covering the entire region at the same time.

Section \ref{section6} shows numerical results.  Concluding remarks are given in Section \ref{section7}.

\section{Network Models}\label{section3}
\subsection{Network Model with Relays}\label{section3A}
The network model with relays is shown in Fig. \ref{SYSTEMMODEL1}, where $K-1$ relays are deployed between the  base station and the destination.  At each relay, signals from both directions are fully decoded and re-transmitted, so that the base station,  relays and the destination form a two-way multi-hop decode-and-forward (DF) D2D network.

\begin{figure}[]
\centering
\includegraphics[width=0.7\textwidth]{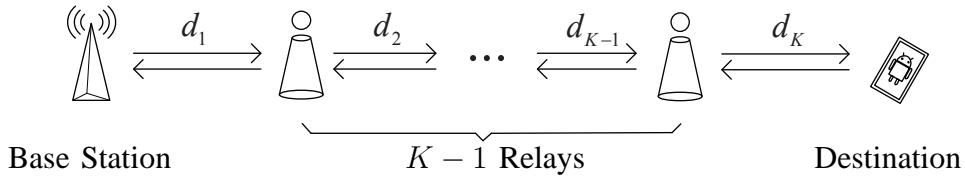}
\put(-350,-10){Base Station}
\put(-45,-10){Destination}
\put(-200,-10){$K-1$ Relays}
\caption{Network model with  relays.}
\label{SYSTEMMODEL1}
\end{figure}

Without loss of generality, we let  relays be located along the line segment that connects the base station and the destination. The $K+1$ devices are connected by $K$ segments and the length of the $k\text{-th}$ segment is $d_k$ in meters for $k\in\{1,\dots,K\}$. The   transmit power of  the $k\text{-th}$ forward and backward links are  $p_k$ and $q_k$. The total bandwidth is $W$. The end-to-end data rate requirement of the forward and backward links are $B$ and $C$, respectively.

We study where the relays should be deployed, so that the reach to the destination is maximized subject to the end-to-end data rate requirement.

\subsection{Network Model with a Mobile Base Station}\label{section3B}
The  network model with a mobile base station is shown in Fig. \ref{SYSTEMMODEL2}, where the mobile  base station is deployed to cover an arbitrary polygon, which may or may not be convex. We first consider the case where the mobile base station can be deployed anywhere. We then consider the case where the base station is constrained to be outside or on the boundary of the polygon, which could be the situation at an incident scene.

In the case that only the mobile base station is deployed, we study where it should be deployed, so that the minimum SNR of any point in the entire region is maximized. Due to the end-to-end data rate requirement, part  of the polygon  may be beyond the coverage of the mobile  base station. In that case, relays are deployed to extend the coverage, where  the optimal positions of the mobile  base station and the relays need to be determined.

\begin{figure}[]
\centering
\includegraphics[width=0.45\textwidth]{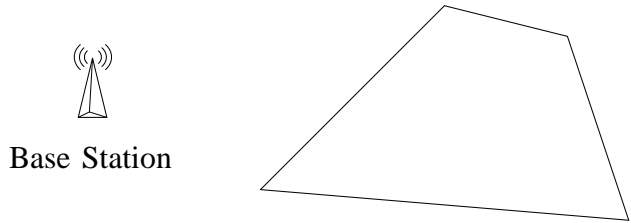}
\put(-235,20){Base Station}
\caption{Network model with a mobile base station.}
\label{SYSTEMMODEL2}
\end{figure}

\section{Channel Model}\label{sectiondatarate}
In the DF  network, an outage event is declared whenever the SNR of any link falls below a prescribed threshold. In other words, the end-to-end data rate is dominated by the weakest link.

\subsection{Fading Model}
We use Rayleigh fading model for links between nearby devices \cite{ref9}. Let the channel power gain of the $k\text{-th}$ segment be:
\begin{align}
h=Ad_k^{-\alpha}\phi
\end{align}
where $A$ is a constant value that considers shadowing and antenna gain,  $\alpha$ is the path loss exponent, $\phi$ denotes the power gain of the Rayleigh fading channel, which follows the exponential distribution with unit mean, $d_k$ is the distance introduced in Section \ref{section3A}.
\subsection{End-to-End Data Rate}\label{datarateB}
We consider uniform time-division and frequency-division scheduling schemes for allocating the resources, which are simple and guarantee robustness
much needed in the Public Safety Broadband Network. Understanding of these simple schemes is also the first step toward more sophisticated solutions involving spectrum sharing.

Under uniform time division, each link uses a $\frac{1}{2K}$ fraction of the time and the total bandwidth. The SNR of the $k\text{-th}$ forward and backward links defined as $\gamma_k$ and $\tau_k$ are:
\begin{subequations}
\begin{alignat}{3}
\gamma_k&=\frac{p_kA\phi}{d_k^{\alpha}W\sigma^2}\\
\tau_k&=\frac{q_kA\phi}{d_k^{\alpha}W\sigma^2}
\end{alignat}
\end{subequations}
where $\sigma^2$ is the power spectral density of the white Gaussian noise, $p_k$ and $q_k$ are the   transmit power  and  W is the bandwidth  introduced in Section \ref{section3A}.

The end-to-end outage probability of the forward and backward links are \cite{fixdistance_outage_3}:
\begin{subequations}
\begin{alignat}{3}
\text{P}[\min(\gamma_1,\dots, \gamma_K )\leq \gamma]&=1-\prod_{k=1}^{K}\exp\left(-\frac{\gamma d_k^\alpha W \sigma^2}{p_kA}\right)\\
\text{P}[\min(\tau_1,\dots, \tau_K )\leq \tau]&=1-\prod_{k=1}^{K}\exp\left(-\frac{\tau d_k^\alpha W \sigma^2}{q_kA}\right)
\end{alignat}
\end{subequations}
where $\gamma$ and $\tau$ are the prescribed SNR threshold of the forward and backward links.

The end-to-end data rate of the forward and backward links under time division defined as $B^{\text{t}}$ and $C^{\text{t}}$ are:
\begin{subequations}\label{000}
\begin{alignat}{3}
B^{\text{t}}&=\frac{W}{2K}\log(1+\gamma)\text{P}[\min(\gamma_1,\dots, \gamma_K )\geq \gamma]=\frac{W}{2K}\log(1+\gamma)\prod_{k=1}^{K}\exp\left(-\frac{\gamma d_k^\alpha W \sigma^2}{p_kA}\right)\\
C^{\text{t}}&=\frac{W}{2K}\log(1+\tau)\text{P}[\min(\tau_1,\dots, \tau_K )\geq \tau]=\frac{W}{2K}\log(1+\tau)\prod_{k=1}^{K}\exp\left(-\frac{\tau d_k^\alpha W \sigma^2}{q_kA}\right).
\end{alignat}
\end{subequations}

Under uniform frequency division, each link uses a $\frac{1}{2K}$ fraction of the bandwidth and the total time. Similarly, the end-to-end data rate of the forward and backward links under frequency division defined as $B^{\text{f}} $ and $C^{\text{f}}$ are:
\begin{subequations}\label{001}
\begin{alignat}{3}
B^{\text{f}}&=\frac{W}{2K}\log(1+\gamma)\prod_{k=1}^{K}\exp\left(-\frac{\gamma d_k^\alpha W \sigma^2}{2Kp_kA}\right)\\
C^{\text{f}}&=\frac{W}{2K}\log(1+\tau)\prod_{k=1}^{K}\exp\left(-\frac{\tau d_k^\alpha W \sigma^2}{2Kq_kA}\right)
\end{alignat}
\end{subequations}
whose  exponential parts  are $\frac{1}{2K}$  of those in  (\ref{000}).

Under power constraint,  the end-to-end data rate achieved according to (\ref{001}) under frequency division are higher than those achieved according to (\ref{000}) under time division. Under energy constraint, the end-to-end  data rate achieved according to (\ref{001}) under frequency division are identical to those achieved according to (\ref{000}) under time division with   $p_k$ and $q_k$ replaced by $2Kp_k$ and $2Kq_k$, respectively.

In the amplify-and-forward (AF) network, each relay amplifies and forwards the signals from both directions. Under the same resources allocation schemes, each link sees no interference. But the noise in previous links is amplified, so that the SNR in AF network decreases with the number of links. Then the end-to-end SNR is smaller than that of the weakest link in AF network, which is also smaller than that of the weakest link in DF network. Hence DF outperforms AF as far as maximizing the reach is concerned.

\section{The Deployment of Relays}\label{section4}
In the case of the deployment of  relays, $K-1$ relays are deployed between the base station and the destination. The optimal distance tuple $(d_1,\dots,d_K)$ maximizes the reach to the destination subject to the end-to-end data rate requirement. Once the optimal distance tuple is computed, the optimal positions of the relays  are straightforward.

\subsection{The General Cases}
Under time division, the end-to-end data rate expressed as  (\ref{000}) should be no less than the data rate requirement $B,C$ introduced in Section \ref{section3A}. The optimization problem  to find the optimal distance tuple $(d_1,\dots,d_K)$ is:
\begin{subequations}\label{cmpTDMA}
\begin{alignat}{3}
\mathop{\text{maximize}}_{\substack{d_1,\dots,d_K}} \ \ &\sum_{k=1}^{K}d_k\\
\text{subject to\ \  }&\frac{W}{2K}\log(1+\gamma)\prod_{k=1}^{K}\exp\left(-\frac{\gamma d_k^\alpha W \sigma^2}{p_kA}\right) \geq B\\&\frac{W}{2K}\log(1+\tau)\prod_{k=1}^{K}\exp\left(-\frac{\tau d_k^\alpha W \sigma^2}{q_kA}\right)\geq C
\end{alignat}
\end{subequations}
which can be rewritten as:
\begin{subequations}\label{0}
\begin{alignat}{3}
\mathop{\text{maximize}}_{\substack{d_1,\dots,d_K}} \ \ &\sum_{k=1}^{K}d_k\\
\text{subject to\ \  }&\sum_{k=1}^{K}\frac{d_k^\alpha}{p_k}\leq b\label{0b}\\
&\sum_{k=1}^{K}\frac{d_k^\alpha}{q_k}\leq c\label{0c}
\end{alignat}
\end{subequations}
where
\begin{subequations}\label{0000}
\begin{alignat}{3}
b=&\frac{A}{\gamma W \sigma^2}\log\left(\frac{W\log(1+\gamma)}{2KB}\right)\label{0000a}\\
c=&\frac{A}{\tau W \sigma^2}\log\left(\frac{W\log(1+\tau)}{2KC}\right).\label{0000b}
\end{alignat}
\end{subequations}

The optimization problem (\ref{0}) is convex and the associated \emph{Lagrangian} is \cite{convex}:
\begin{align}\label{01}
L(d_1,\dots,d_K,\lambda,\nu)=-\sum_{k=1}^{K}d_k+\lambda\left(\sum_{k=1}^{K}\frac{d_k^\alpha}{p_k}-b\right)+\nu\left(\sum_{k=1}^{K}\frac{d_k^\alpha}{q_k}-c\right)
\end{align}
where $\lambda$ and $\nu$ are the \emph{Lagrangian} multipliers.

Let $d_1^*,\dots,d_K^*$, $\lambda^*$ and $\nu^*$ be the optimal solutions. The \emph{Karush-Kuhn-Tucker} (KKT) conditions are:
\begin{subequations}\label{03}
\begin{alignat}{3}
\lambda^* &\geq 0 \label{03a}\\
\nu^* &\geq 0 \label{03b}\\
\sum_{k=1}^{K}\frac{(d_k^{*})^\alpha}{p_k}-b &\leq 0 \label{03c}\\
\sum_{k=1}^{K}\frac{(d_k^{*})^\alpha}{q_k}-c &\leq 0 \label{03d}\\
\lambda^*\left(\sum_{k=1}^{K}\frac{(d_k^{*})^\alpha}{p_k}-b\right)&=0 \label{03e}\\
\nu^*\left(\sum_{k=1}^{K}\frac{(d_k^{*})^\alpha}{q_k}-c\right) &=0 \label{03f}\\
\frac{\partial L(d_1^*,\dots,d_K^*,\lambda^*,\nu^*)}{\partial d_k^*}&=0, k = 1,\dots, K. \label{03g}
\end{alignat}
\end{subequations}
The solutions of (\ref{03g}) are:
\begin{align}\label{04}
d_k^*=\left(\frac{p_kq_k}{\alpha\lambda^* q_k+\alpha \nu^* p_k}\right)^{\frac{1}{\alpha-1}}, k = 1,\dots, K.
\end{align}

We find all feasible distance tuples $(d_1, \dots, d_K)$ satisfying the KKT conditions (\ref{03}) and the one with the maximum sum distance is the optimal distance tuple. The  solutions of the KKT conditions (\ref{03}) have three possible cases.

The first case is that $\lambda^*=0$ and $\nu^*>0$. Using (\ref{03f}) and (\ref{04}), $\nu^*$ satisfies
\begin{align}\label{05}
\sum_{k=1}^{K}\frac{q_k^{\frac{1}{\alpha-1}}}{(\alpha \nu^*)^{\frac{\alpha}{\alpha-1}}}=c.
\end{align}
The left side of (\ref{05})  decreases with $\nu^*$, so that it is easy to obtain the unique solution of (\ref{05}). Then the distance tuple is computed by (\ref{04}). If the distance tuple satisfies (\ref{03c}), it is a feasible solution of the KKT conditions.

The second case is that $\lambda^*>0$ and $\nu^*=0$. Using (\ref{03e}) and (\ref{04}), $\lambda^*$ satisfies
\begin{align}\label{06}
\sum_{k=1}^{K}\frac{p_k^{\frac{1}{\alpha-1}}}{(\alpha \lambda^*)^{\frac{\alpha}{\alpha-1}}}=b.
\end{align}
Similarly, it is easy to obtain the unique solution of (\ref{06}). Then the distance tuple is computed by (\ref{04}). If the distance tuple satisfies (\ref{03d}), it is a feasible solution of the KKT conditions.

The third case is that $\lambda^*>0$ and $\nu^*>0$.  Using (\ref{03e}), (\ref{03f}) and (\ref{04}), $\lambda^*$ and $\nu^*$ satisfy
\begin{subequations}\label{07}
\begin{alignat}{3}
&\sum_{k=1}^{K}\frac{p_k^{\frac{1}{\alpha-1}}q_k^{\frac{\alpha}{\alpha-1}}}{(\alpha\lambda^* q_k +\alpha\nu^* p_k)^{\frac{\alpha}{\alpha-1}}}=b\label{07a}\\
&\sum_{k=1}^{K}\frac{p_k^{\frac{\alpha}{\alpha-1}}q_k^{\frac{1}{\alpha-1}}}{(\alpha\lambda^* q_k +\alpha\nu^* p_k)^{\frac{\alpha}{\alpha-1}}}=c.\label{07b}
\end{alignat}
\end{subequations}
Functions ({\ref{07a}}) and ({\ref{07b}}) both decrease with $\lambda^*$ and $\nu^*$, so that it is not difficult to obtain the solution. Then the feasible distance tuple satisfying the KKT conditions is computed by (\ref{04}).

\renewcommand{\algorithmicrequire}{\textbf{Input:}}
\renewcommand{\algorithmicensure}{\textbf{Output:}}

\begin{algorithm}[]
\caption{Computing the optimal positions of the relays }
\label{Algorithm0}
\begin{algorithmic}[1]
   \State \textbf{Input: }$A$, $B$, $C$, $W$, $\alpha$, $\sigma^2$, $p_k$, $q_k$, for $k\in\{1,\dots,K\}$.
\State \textbf{Output: }$(d_1,\dots,d_K)$.
   \State Compute the  feasible distance tuple in the case of $\lambda^*=0$ and $\nu^*>0$ using (\ref{03c}), (\ref{04}) and (\ref{05}).
   \State Compute the  feasible distance tuple in the case of $\lambda^*>0$ and $\nu^*=0$ using (\ref{03d}), (\ref{04}) and (\ref{06}).
   \State Compute the  feasible distance tuple in the case of $\lambda^*>0$ and $\nu^*>0$ using  (\ref{04}) and (\ref{07}).
\State Compare all feasible distance tuples and  \textbf{return} the one with the maximum sum distance.
\end{algorithmic}
\end{algorithm}

The numerical method for computing the optimal distance tuple for maximizing the reach to the destination is summarized as Algorithm \ref{Algorithm0}.
If either $\lambda^*=0$ or $\nu^*=0$, the solution of  (\ref{05}) or (\ref{06}) is unique, so that there is no more than one feasible distance tuple. If $\lambda^*>0$ and $\nu^*>0$, the solution of (\ref{07}) with non-identical parameters is typically unique or null, so is the number of the feasible distance tuple. Then the complexity of Algorithm \ref{Algorithm0} is the comparison of no more than $3$ feasible distance tuples.

Under frequency division, the end-to-end data rate expressed as  (\ref{001}) should be no less than the data rate requirement. The optimization problem  to find the optimal distance tuple $(d_1,\dots,d_K)$ is:
\begin{subequations}\label{0f}
\begin{alignat}{3}
\mathop{\text{maximize}}_{\substack{d_1,\dots,d_K}} \ \ &\sum_{k=1}^{K}d_k\label{0fa}\\
\text{subject to\ \  }&\sum_{k=1}^{K}\frac{d_k^\alpha}{p_k}\leq 2Kb\label{0fb}\\
&\sum_{k=1}^{K}\frac{d_k^\alpha}{q_k}\leq 2Kc\label{0fc}
\end{alignat}
\end{subequations}
which is convex and similar to (\ref{0}). Then following the same way as  Algorithm \ref{Algorithm0}, the optimal distance tuple for maximizing the reach to the destination is computed.

\subsection{The Cases with Identical Parameters}
In the following, we study the cases with identical parameters, which shed more light on the intuition about how to optimally deploy  relays for maximizing the reach to the destination.  The identical parameters are $p_k=q_k=p$, $\gamma=\tau$ and $B=C$ for $k\in\{1,\dots,K\}$.

\begin{proposition}
\label{proposition:0}
Under time division, the optimal distance of each link is $\left(\frac{bp}{K}\right)^{\frac{1}{\alpha}}$ and the maximum reach to  the destination is ${({bp})^{\frac{1}{\alpha}}K^{\frac{\alpha-1}{\alpha}}}$. Under frequency division, the optimal distance of each link is $(2bp)^{\frac{1}{\alpha}}$ and the maximum reach to  the destination is $({2bp})^{\frac{1}{\alpha}}K$.
\end{proposition}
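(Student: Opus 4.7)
The plan is to exploit the symmetry that the identical parameters induce in the convex program (\ref{0}) and its frequency-division analogue (\ref{0f}). First I would substitute $p_k = q_k = p$ into (\ref{0000}) and observe that since $\gamma = \tau$ and $B = C$, one has $b = c$; consequently constraints (\ref{0b}) and (\ref{0c}) coincide, so the time-division problem collapses to
\begin{equation*}
\mathop{\text{maximize}}_{d_1,\dots,d_K} \sum_{k=1}^{K} d_k \quad \text{subject to} \quad \sum_{k=1}^{K} d_k^{\alpha} \leq bp.
\end{equation*}
The frequency-division case (\ref{0f}) reduces analogously to the same form with the right-hand side $2Kbp$.

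Next I would solve the reduced problem. The cleanest route is the power-mean (equivalently Hölder) inequality: for $\alpha > 1$,
\begin{equation*}
\sum_{k=1}^{K} d_k \leq K^{\frac{\alpha-1}{\alpha}} \left(\sum_{k=1}^{K} d_k^{\alpha}\right)^{\frac{1}{\alpha}},
\end{equation*}
with equality if and only if $d_1 = \cdots = d_K$. Combining with the constraint gives the upper bound $(bp)^{1/\alpha} K^{(\alpha-1)/\alpha}$, and the uniform allocation $d_k = (bp/K)^{1/\alpha}$ both attains this bound and satisfies the constraint with equality. Alternatively, and more in keeping with the paper's earlier derivation, I would specialize the KKT expression (\ref{04}): substituting $p_k = q_k = p$ yields $d_k^* = \bigl(p/(\alpha(\lambda^* + \nu^*))\bigr)^{1/(\alpha-1)}$, which is manifestly independent of $k$, and the active constraint (either (\ref{03e}) or (\ref{03f})) then fixes $\lambda^* + \nu^* > 0$ so that $K(d_k^*)^\alpha = bp$, reproducing the claimed value.

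For the frequency-division statement I would rerun the identical argument with $bp$ replaced by $2Kbp$: the uniform allocation becomes $d_k = (2Kbp/K)^{1/\alpha} = (2bp)^{1/\alpha}$, and the maximum reach is $K \cdot (2bp)^{1/\alpha}$.

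I do not foresee a serious obstacle here; the essential work has already been carried out in the general-case KKT analysis preceding the proposition, and the identical-parameter specialization only requires recognizing that both dual multipliers act symmetrically on every link. The one point that warrants care is the strict positivity of $\lambda^* + \nu^*$ (so that the stationarity condition is well-defined), which follows because the objective strictly increases in each $d_k$ and hence at least one of the two identical constraints must be active at the optimum.
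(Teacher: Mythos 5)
Your proposal is correct, and your second route (specializing the KKT expression (\ref{04}) to $p_k=q_k=p$, observing that the two constraints coincide and that the resulting $d_k^*$ is independent of $k$, then using the active constraint to pin down $K(d_k^*)^\alpha = bp$) is exactly the paper's proof; the paper likewise handles frequency division by substituting $2Kb$ for $b$. Your primary route via the power-mean inequality
\begin{equation*}
\sum_{k=1}^{K} d_k \leq K^{\frac{\alpha-1}{\alpha}}\left(\sum_{k=1}^{K} d_k^{\alpha}\right)^{\frac{1}{\alpha}}
\end{equation*}
is a genuinely more elementary alternative: it certifies global optimality of the uniform allocation directly, without appealing to convexity of problem (\ref{0}) to upgrade a KKT point to a global maximizer, and it makes the equality case (all $d_k$ equal) transparent. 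Your closing remark about why $\lambda^*+\nu^*>0$ (the objective is strictly increasing, so the constraint must be active) is a point the paper asserts without justification, so including it is a mild improvement rather than a deviation.
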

\begin{proof}Under time division, the constraints (\ref{0b}) and (\ref{0c}) are identical. The KKT conditions become:
\begin{subequations}\label{08new}
\begin{alignat}{3}
\lambda^* &\geq 0 \label{08newa}\\
\sum_{k=1}^{K}\frac{(d_k^{*})^\alpha}{p}-b &\leq 0 \label{08newb}\\
\lambda^*\left(\sum_{k=1}^{K}\frac{(d_k^{*})^\alpha}{p}-b\right)&=0 \label{08newc}\\
d_k^*&=\left(\frac{p}{\alpha\lambda^*}\right)^{\frac{1}{\alpha-1}}, k = 1,\dots, K\label{08newd}
\end{alignat}
\end{subequations}
whose feasible solution is unique with $\lambda^* > 0$. The optimal distance of each link and the maximum reach to  the destination are: 
\begin{subequations}\label{08}
\begin{alignat}{3}
&d_1^*=,\dots,=d_K^*=\left(\frac{bp}{K}\right)^{\frac{1}{\alpha}}\label{08a}\\
&\sum_{k=1}^{K}d_k^*={({bp})^{\frac{1}{\alpha}}K^{\frac{\alpha-1}{\alpha}}}.\label{08b}
\end{alignat}
\end{subequations}

Under frequency division, the optimization problem (\ref{0f}) uses $2Kb$ instead of $b$ in the optimization problem (\ref{0}). Then using   $2Kb$ instead of $b$ in (\ref{08a}) and (\ref{08b}), the optimal distance of each link and the maximum reach to  the destination are: 
\begin{subequations}\label{09}
\begin{alignat}{3}
&d_1^*=,\dots,=d_K^*=(2bp)^{\frac{1}{\alpha}}\label{09a}\\
&\sum_{k=1}^{K}d_k^*=({2bp})^{\frac{1}{\alpha}}K.\label{09b}
\end{alignat}
\end{subequations}
\end{proof}

\begin{proposition}
\label{proposition:01} Let  $\beta$ be $\frac{W\log(1+\gamma)}{2B}$.
 Under time division, the optimal distance of each link decreases with $K$, and the maximum reach to  the destination increases with $K$ for $K\leq \lfloor\beta \exp\left(\frac{1}{1-\alpha}\right)\rfloor$, but decreases for  $K\geq \lceil\beta \exp\left(\frac{1}{1-\alpha}\right)\rceil$. Under frequency division, the optimal distance of each link decreases with $K$, and the maximum reach to  the destination increases with $K$ for $K\leq \lfloor\beta \exp\left(-\frac{1}{\alpha}\right)\rfloor$, but decreases for $K\geq \lceil\beta \exp\left(-\frac{1}{\alpha}\right)\rceil$.
\begin{proof}
Under time division, substituting (\ref{0000}) into (\ref{08}), the optimal distance of each link is:
\begin{align}\label{010}
\left(\frac{bp}{K}\right)^{\frac{1}{\alpha}}=\left(\frac{Ap}{\gamma W \sigma^2K}\log\left(\frac{\beta}{K}\right)\right)^{\frac{1}{\alpha}}
\end{align}
which decreases with $K$. The maximum reach to the destination is:
\begin{align}\label{011}
({bp})^{\frac{1}{\alpha}}K^{\frac{\alpha-1}{\alpha}}=\left(\frac{Ap}{\gamma W \sigma^2}\log\left(\frac{\beta}{K}\right)\right)^{\frac{1}{\alpha}}K^{\frac{\alpha-1}{\alpha}}
\end{align}
which increases with $K$ for $K\leq \lfloor\beta\exp\left(\frac{1}{1-\alpha}\right)\rfloor$, but decreases for $K\geq \lceil\beta \exp\left(\frac{1}{1-\alpha}\right)\rceil$.

Under frequency division, substituting (\ref{0000}) into (\ref{09}), the optimal distance of each link is:
\begin{align}\label{012}
({2bp})^{\frac{1}{\alpha}}=\left(\frac{2Ap}{\gamma W \sigma^2}\log\left(\frac{\beta}{K}\right)\right)^{\frac{1}{\alpha}}
\end{align}
which decreases with $K$. The maximum reach to the destination is:
\begin{align}\label{013}
({2bp})^{\frac{1}{\alpha}}K=\left(\frac{2Ap}{\gamma W \sigma^2}\log\left(\frac{\beta}{K}\right)\right)^{\frac{1}{\alpha}}K
\end{align}
which increases with $K$ for $K\leq \lfloor\beta\exp\left(-\frac{1}{\alpha}\right)\rfloor$, but decreases for $K\geq \lceil\beta \exp\left(-\frac{1}{\alpha}\right)\rceil$.
\end{proof}
\end{proposition}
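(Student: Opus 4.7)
The plan is to reduce the proposition to elementary calculus on two explicit one-variable functions of $K$ by plugging the formula for $b$ from (\ref{0000a}) into the closed-form expressions for the optimal link distance and the maximum reach already obtained in Proposition~\ref{proposition:0}. Writing $b = \frac{A}{\gamma W\sigma^2}\log(\beta/K)$, where $\beta = \frac{W\log(1+\gamma)}{2B}$, the optimal per-link distances under time division and frequency division become the two scalar functions
\[
 d^{\text{t}}(K) \;=\; \Bigl(\tfrac{Ap}{\gamma W\sigma^2 K}\log(\beta/K)\Bigr)^{1/\alpha},
 \qquad
 d^{\text{f}}(K) \;=\; \Bigl(\tfrac{2Ap}{\gamma W\sigma^2}\log(\beta/K)\Bigr)^{1/\alpha},
\]
and the maximum reaches become $R^{\text{t}}(K) = d^{\text{t}}(K)\cdot K$ and $R^{\text{f}}(K) = d^{\text{f}}(K)\cdot K$. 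From here, everything is monotonicity analysis of these scalar functions on the natural domain $K \in (0,\beta)$ (outside this domain $b$ is non-positive and the problem is infeasible).

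For the monotonicity of the optimal link distance, I would argue directly without calculus. Under time division, $d^{\text{t}}(K)^\alpha$ is (up to a positive constant) the product of two strictly decreasing positive functions $1/K$ and $\log(\beta/K)$, hence strictly decreasing in $K$. Under frequency division, $d^{\text{f}}(K)^\alpha$ is proportional to $\log(\beta/K)$ alone, which is again strictly decreasing. Taking $\alpha$-th roots preserves monotonicity, giving the claim on optimal distances in both cases.

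For the reach, I would replace $R^{\text{t}}$ and $R^{\text{f}}$ by their $\alpha$-th powers (monotone), which eliminates the ungainly fractional exponents, and differentiate with respect to continuous $K$. For time division, set $f(K)=\log(\beta/K)K^{\alpha-1}$ and compute $f'(K) = K^{\alpha-2}\bigl[(\alpha-1)\log(\beta/K)-1\bigr]$, so the unique stationary point is $K^{\text{t}}_\star = \beta\exp(\tfrac{1}{1-\alpha})$; the sign of $f'$ shows $f$ is strictly increasing to the left of $K^{\text{t}}_\star$ and strictly decreasing to the right. For frequency division, set $g(K)=\log(\beta/K)K^{\alpha}$ and compute $g'(K)=K^{\alpha-1}\bigl[\alpha\log(\beta/K)-1\bigr]$, giving the unique stationary point $K^{\text{f}}_\star = \beta\exp(-\tfrac{1}{\alpha})$ with the analogous sign pattern.

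The last step is passing from the continuous analysis to the integer statement. Since $f$ and $g$ are unimodal with interior maxima, the integer-valued sequences inherit the same monotonicity on either side of their continuous maximizers, which yields precisely the floor/ceiling thresholds in the proposition. The only nuisance I anticipate is bookkeeping the boundary case where the continuous maximizer is itself an integer, and verifying that the path-loss regime $\alpha>1$ (implicit throughout the paper) makes the expressions $\exp(\tfrac{1}{1-\alpha})$ and $\exp(-\tfrac{1}{\alpha})$ lie in $(0,1)$ so that $K^{\text{t}}_\star,K^{\text{f}}_\star<\beta$ as required for feasibility; no deeper obstacle is expected beyond this.
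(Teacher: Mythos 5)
Your proposal is correct and follows essentially the same route as the paper: substitute $b=\frac{A}{\gamma W\sigma^2}\log(\beta/K)$ into the closed forms from Proposition~1 and analyze monotonicity in $K$. The only difference is that you actually carry out the derivative computation and the continuous-to-integer step that the paper's proof asserts without detail, which is a welcome tightening rather than a departure.
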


Therefore, when the number of  relays is small, more relays increase the maximum reach to  the destination. Beyond a certain number, deploying more relays not only does not provide further improvement, but decreases the maximum reach.

\subsection{Discussions}\label{Discuss}
In the following, we discuss the upper bound of the number of relays and the traffic types for general cases.
\subsubsection{The Number of Relays}
The fraction of the resources allocated to each link decreases with the number of  relays, which may decrease the end-to-end data rate. In order to satisfy the end-to-end data rate requirement, the number of  relays should be bounded.

\begin{proposition}
\label{proposition:0}
Let $\zeta$ be $\frac{W\log(1+\tau)}{2C}$ and the number of  relays is upper bounded by:
\begin{align}\label{00000}
K-1\leq\min\left(\lfloor\beta-1\rfloor,\lfloor\zeta-1\rfloor\right).
\end{align}
\end{proposition}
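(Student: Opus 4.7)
The plan is to derive the bound directly from the structural shape of the end-to-end rate expressions in (\ref{000}) and (\ref{001}), without invoking the KKT machinery at all. The key observation is that every exponential factor in $B^{\text{t}}$, $C^{\text{t}}$, $B^{\text{f}}$, $C^{\text{f}}$ is at most $1$, since each exponent is non-positive. Hence, regardless of the placement of the relays,
\begin{equation*}
B^{\text{t}} \le \frac{W}{2K}\log(1+\gamma), \qquad B^{\text{f}} \le \frac{W}{2K}\log(1+\gamma),
\end{equation*}
with the bound approached only in the degenerate limit $d_k\to 0$. An analogous pair of inequalities holds for $C^{\text{t}}$ and $C^{\text{f}}$ with $\tau$ in place of $\gamma$.

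With these ceilings in hand, the argument is essentially one line in each direction. The forward-link constraint $B^{\text{t}} \ge B$ (or $B^{\text{f}} \ge B$) can only be met if the ceiling itself clears $B$, i.e. $\frac{W\log(1+\gamma)}{2K}\ge B$, which rearranges to $K\le \beta$. Applying the same reasoning to the backward link gives $K\le \zeta$. As a sanity check, the same two inequalities pop out of the condition $b\ge 0$ and $c\ge 0$ read off (\ref{0000a})--(\ref{0000b}); otherwise the feasible set of (\ref{0}) or (\ref{0f}) would contain no non-degenerate deployment.

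It remains to translate $K\le \beta$ and $K\le \zeta$ into the floor form stated in (\ref{00000}). Since $K$ is a positive integer, $K\le \beta$ is equivalent to $K-1\le \lfloor\beta-1\rfloor$: if $\beta$ is not an integer then $\lfloor\beta-1\rfloor = \lfloor\beta\rfloor-1$ and $K\le \lfloor\beta\rfloor$, while if $\beta$ is an integer then $\lfloor\beta-1\rfloor = \beta-1$ and $K\le \beta$ reads $K-1\le \beta-1$ directly. The same holds for $\zeta$, and taking the tighter of the two constraints yields the claimed bound.

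The main obstacle is less mathematical than notational, namely verifying that the floor arithmetic indeed gives the asymmetric form $\lfloor\beta-1\rfloor$ rather than the more natural-looking $\lfloor\beta\rfloor-1$ or $\lceil\beta\rceil-1$; as noted above, the two integer-versus-non-integer cases must be handled separately. Everything else is a direct consequence of the pre-factor $\frac{W}{2K}\log(1+\gamma)$ being the unconditional ceiling on the end-to-end rate, a property that is immediate from (\ref{000}) and (\ref{001}) and requires no optimization.
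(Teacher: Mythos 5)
Your proposal is correct and follows essentially the same route as the paper: the paper's proof simply observes that $b\geq 0$ and $c\geq 0$ are required for the feasibility of (\ref{0}) and (\ref{0f}), which is exactly the condition $\frac{W}{2K}\log(1+\gamma)\geq B$ and $\frac{W}{2K}\log(1+\tau)\geq C$ that you derive from the rate ceiling. Your additional care with the floor arithmetic (noting $\lfloor\beta-1\rfloor=\lfloor\beta\rfloor-1$ in both the integer and non-integer cases) is a harmless elaboration of a step the paper leaves implicit.
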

\begin{proof} For the optimization problem (\ref{0}) and (\ref{0f}), the parameters defined as $b$ and $c$ should be positive,
\begin{subequations}\label{014}
\begin{alignat}{3}
b&=\frac{A}{\gamma W \sigma^2}\log\left(\frac{\beta}{K}\right)\geq 0\label{014a}\\
c&=\frac{A}{\tau W \sigma^2}\log\left(\frac{\zeta}{K}\right)\geq 0.\label{014b}
\end{alignat}
\end{subequations}
Hence (\ref{00000}) must hold.
\end{proof}

Therefore, each additional relay adds to the maximum reach to  the destination in general. But there must exist a certain number, beyond which deploying more relays is not necessary and does not provide further improvement in the maximum reach.

\subsubsection{Traffic Types}
For continuous traffic, each link starts transmission when it is scheduled. Regardless of under time division or frequency division, the resources are fully utilized. For bursty traffic, each link starts transmission as soon as the bursty package arrives. Under time division, the time resources could be fully utilized. Under frequency division, only one link is transmitting with a fraction of the spectrum while others are waiting for the bursty package, so that the spectrum resources are inefficiently utilized.

Therefore, according to the resources utilization, time division and frequency division  are identical for continuous traffic, but time division outperforms frequency division for bursty traffic.

\section{The Deployment of a Mobile Base Station}\label{section5}
In the case of the deployment of a mobile base station, the mobile base station is deployed to cover an arbitrary polygon, which may or may not be convex. The optimal position of the mobile base station maximizes the minimum SNR of any point over the entire region. Both the general case where the base station can be located anywhere and the situation where the base station is constrained to be outside or on the boundary of the polygon are considered.

\subsection{Without Place Restrictions}
For isotropic channel and omni-directional antennas, the coverage of the mobile base station is a disk. The optimal position of the mobile  base station for maximizing the minimum SNR of any point over the entire polygon is the center of the minimum disk covering  the polygon.
\begin{proposition}
\label{proposition:1}
The optimal position of the mobile base station for covering the polygon is identical to the optimal position of the mobile base station for covering all vertices of the polygon.
\end{proposition}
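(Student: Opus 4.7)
The plan is to recast the coverage problem as a smallest enclosing disk problem and then exploit the convexity of a disk. First I would observe that, with isotropic propagation and omnidirectional antennas, the SNR at a point $x$ from a base station at $y$ depends monotonically on the distance $\|x-y\|$. Hence maximizing the minimum SNR over the polygon $\mathcal{P}$ is equivalent to minimizing $\max_{x\in\mathcal{P}}\|x-y\|$ over $y$, i.e., finding the center of the smallest closed disk that contains $\mathcal{P}$. The same reasoning applied to the vertex set $V$ shows that the optimal position for covering $V$ is the center of the smallest closed disk containing $V$. So it suffices to prove that the family of closed disks containing $\mathcal{P}$ coincides with the family of closed disks containing $V$; their unique minimum-radius elements will then share a center.

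Next I would verify this equality of families in two directions. The forward inclusion is immediate: any disk that contains $\mathcal{P}$ contains $V$ since $V\subseteq\mathcal{P}$. For the reverse inclusion, I would use the fact that a closed disk is convex, so any disk containing $V$ also contains the convex hull $\mathrm{conv}(V)$. It then remains to establish the geometric claim $\mathcal{P}\subseteq\mathrm{conv}(V)$, which holds even when $\mathcal{P}$ is non-convex: each edge is a segment between two vertices and therefore lies in $\mathrm{conv}(V)$, and by the Jordan curve theorem for simple polygons the interior of $\mathcal{P}$ is the bounded component cut off by these edges, which is likewise contained in the convex, bounded set $\mathrm{conv}(V)$.

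Combining both inclusions, the two families of covering disks are identical, and therefore their smallest members, and in particular the centers that the statement refers to, coincide. The only subtle point is the inclusion $\mathcal{P}\subseteq\mathrm{conv}(V)$ in the non-convex case; I would either cite the standard convex-hull characterization or give the short elementary argument above, noting that $\mathrm{conv}(V)$ is a closed convex superset of the boundary $\partial\mathcal{P}$ and hence a superset of every point $\partial\mathcal{P}$ encloses. Everything else is a direct monotonicity/convexity argument and requires no calculation.
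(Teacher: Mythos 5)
Your proposal is correct and follows essentially the same route as the paper: both reduce the problem to finding the minimum enclosing disk and then use convexity of a disk to show that any disk covering the vertices covers the whole polygon, so the two minimum disks (and hence their centers) coincide. Your version is somewhat more careful than the paper's on the non-convex case, passing through $\mathrm{conv}(V)$ and justifying that the interior of a simple polygon lies in the convex hull of its vertices, where the paper argues more informally via segments joining boundary points; this is a tightening of the same argument rather than a different one.
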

\begin{proof}The minimum disk covering the polygon is defined as $\mathcal{D}$, which covers all vertices of the polygon. The  minimum disk covering all vertices of the polygon is defined as  $\mathcal{V}$, which should be proved to cover the entire polygon. Since the vertices are covered by $\mathcal{V}$, the edges of the polygon connecting any two adjacent vertices are covered, and hence also the line segments connecting any two points on the sides. Then the entire polygon is covered, so that $\mathcal{D}$ and  $\mathcal{V}$ are the same disk. Therefore, the optimal position of the mobile  base station for covering the polygon which maximizes the minimum SNR over any point in the entire polygon is identical to the optimal position of the mobile base station for covering all vertices of the polygon which maximizes the minimum SNR of any vertex of the polygon.
\end{proof}

Following  \textit{Proposition \ref{proposition:1}}, we simplify the problem to find the center of the minimum disk covering all vertices  of the polygon.  Let the coordinate of the $M$ vertices of the polygon be $(x_1,y_1),\dots,(x_M,y_M)$. The optimization  problem to find the optimal position of the mobile base station is:
\begin{subequations}\label{c0}
\begin{alignat}{3}
\mathop{\text{minimize}}_{x,y}\ \ \ &r\label{c0a}\\
\text{subject to}\ \ &(x-x_i)^2+(y-y_i)^2\leq r^2,i=1\dots M\label{c0b}
\end{alignat}
\end{subequations}
which is a convex problem and $(x,y)$ is the optimal position of the mobile base station. The optimal position of the mobile base station also maximizes the $N$-of-$N$ system lifetime for homogeneous destinations\footnote{With large number of homogeneous destinations uniformly distributed in the polygon, each destination communicates with the mobile base station. The $N$-of-$N$ system lifetime is the time until any destinations run out of energy.} in \cite{lifetime_1}. Instead of using KKT conditions, we study the structure of the optimal position of the mobile base station to develop a more efficient numerical method.

In order to satisfy the constraint (\ref{c0b}), $(x,y)$ locates in a disk  whose center is the vertex $(x_i,y_i)$ and radius is $r$ for $i\in\{1,\dots,M\}$. If $r$ is too small, the intersection of the $M$ disks centered at $(x_1,y_1),\dots,(x_M,y_M)$ is empty. Let $r$ increase and at the first time there is one point in the intersection, the point is the optimal position of the mobile base station. There  are only two possible cases as far as the geometry of the problem is concerned,  which are illustrated in Fig. \ref{covernew}. The coverage of the mobile base station is a disk, which could be determined by a  diameter in Fig. \ref{fig:subfig0:a} or three points on the boundary in Fig. \ref{fig:subfig0:b}. The optimal position of the mobile base station is marked by a star. We define $\mathcal{C}_i$ as the circle whose center  is the vertex $(x_i,y_i)$ and radius is $r$.

\begin{figure}[]
  \centering
  \subfigure[]{
    \label{fig:subfig0:a} 
    \includegraphics[width=0.3\textwidth]{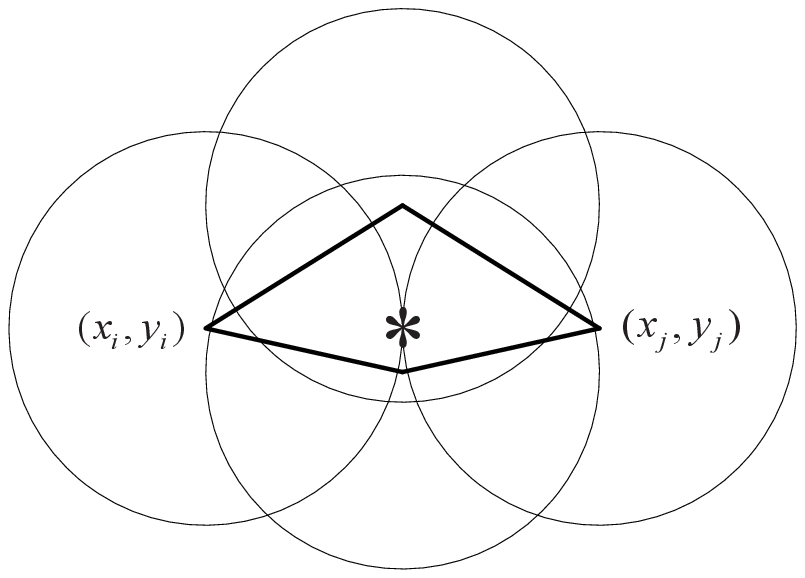}}
  \hspace{0.1in}
  \subfigure[]{
    \label{fig:subfig0:b} 
    \includegraphics[width=0.3\textwidth]{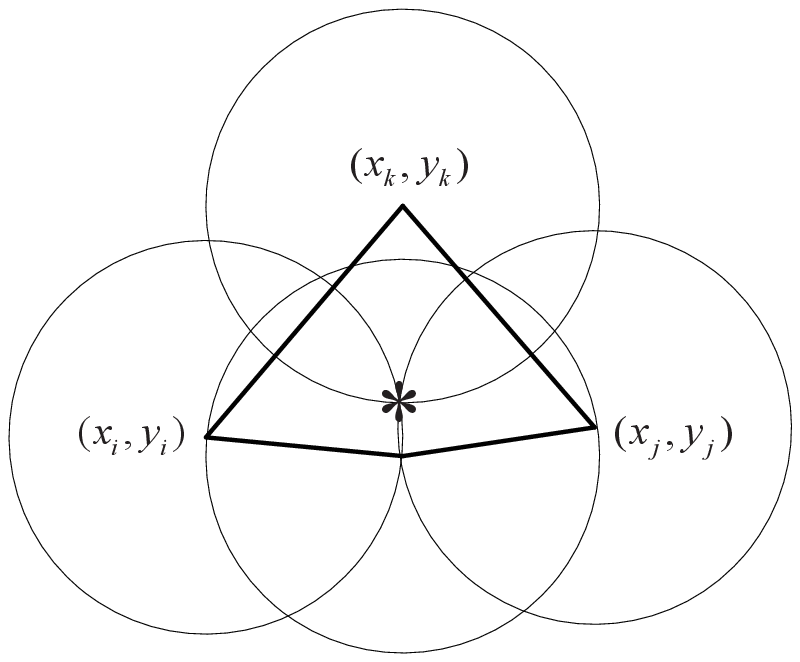}}
  \caption{The optimal position of the mobile base station without place restrictions.}
  \label{covernew} 
\end{figure}

In Fig. \ref{fig:subfig0:a}, the optimal position of the mobile  base station $(x,y)$ is the sole intersection point of two circles $\mathcal{C}_i$ and $\mathcal{C}_j$. The line segment connecting the vertices $(x_i, y_i)$ and $(x_j, y_j)$ is a diameter of the coverage of the mobile base station. For two  indexes $i, j\in\{1,\dots, M\}$, the coordinate of the mobile base station is:
\begin{subequations}\label{c1}
\begin{alignat}{3}
x&=\frac{x_i+x_j}{2}\label{c1a}\\
y&=\frac{y_i+y_j}{2}.\label{c2b}
\end{alignat}
\end{subequations}

In Fig. \ref{fig:subfig0:b}, the optimal position of the  mobile base station is the sole intersection point of three circles $\mathcal{C}_i$, $\mathcal{C}_j$ and $\mathcal{C}_k$, which is also the center of the circumcircle of the triangle whose vertices are $(x_i, y_i)$, $(x_j, y_j)$ and $(x_k, y_k)$. For three different indexes $i, j, k\in\{1,\dots, M\}$, the coordinate of the mobile base station is:
\begin{subequations}\label{c3}
\begin{alignat}{3}
x&=\frac{(x_i^2+y_i^2)(y_j-y_k)+(x_j^2+y_j^2)(y_k-y_i)+(x_k^2+y_k^2)(y_i-y_j)}{2x_i(y_j-y_k)+2x_j(y_k-y_i)+2x_k(y_i-y_j)}\label{c3a}\\
y&=\frac{(x_i^2+y_i^2)(x_k-x_j)+(x_j^2+y_j^2)(x_i-x_k)+(x_k^2+y_k^2)(x_j-x_i)}{2x_i(y_j-y_k)+2x_j(y_k-y_i)+2x_k(y_i-y_j)}.\label{c3b}
\end{alignat}
\end{subequations}

We find all candidate positions described by (\ref{c1}) and (\ref{c3}), and compute the maximum distance to all vertices for each candidate position as:
\begin{align}\label{c4}
r=\max_{i\in\{1,\dots,M\}}\sqrt{(x-x_i)^2-(y-y_i)^2}.
\end{align}
The candidate position with the minimum maximum distance to all vertices is the optimal position of the mobile base station.

\begin{algorithm}[]
\caption{Computing the optimal position of the mobile base station without place restrictions}
\label{Algorithmc0}
\begin{algorithmic}[1]
   \State \textbf{Input: }$(x_i,y_i)$ for $i\in\{1,\dots,M\}$.
\State \textbf{Output: }$(x,y)$.
   \State Find the candidate positions using (\ref{c1}), (\ref{c3}).
   \State Compute the maximum distance to all vertices for each candidate position using (\ref{c4}).
\State \textbf{Return} $(x,y)$ which is the coordinate of the candidate position with the minimum maximum distance to all vertices.
\end{algorithmic}
\end{algorithm}

The numerical method computes the optimal  position of the mobile base station without place constrictions as Algorithm \ref{Algorithmc0}. The complexity is the comparison of no more than $\frac{1}{6}M^3-\frac{1}{6}M$ candidate positions satisfying (\ref{c1}) or (\ref{c3}) where $M\geq 3$.

\subsection{With Place Restrictions}
When the mobile base station shall be deployed outside or on the boundary of the polygon, the optimization problem is changed to:
\begin{subequations}\label{24}
\begin{alignat}{3}
\mathop{\text{minimize}}_{x,y}\ \ \ &r\label{24a}\\
\text{subject to}\ \ &(x-x_i)^2+(y-y_i)^2\leq r^2,i=1\dots M\label{24b}\\
&(x,y)\notin \mathcal{I}\label{24c}
\end{alignat}
\end{subequations}
which is a nonconvex problem in general and $\mathcal{I}$ stands for the interior of the polygon.

\begin{figure}[]
  \centering
  \subfigure[]{
    \label{fig:subfig:c} 
    \includegraphics[width=0.3\textwidth]{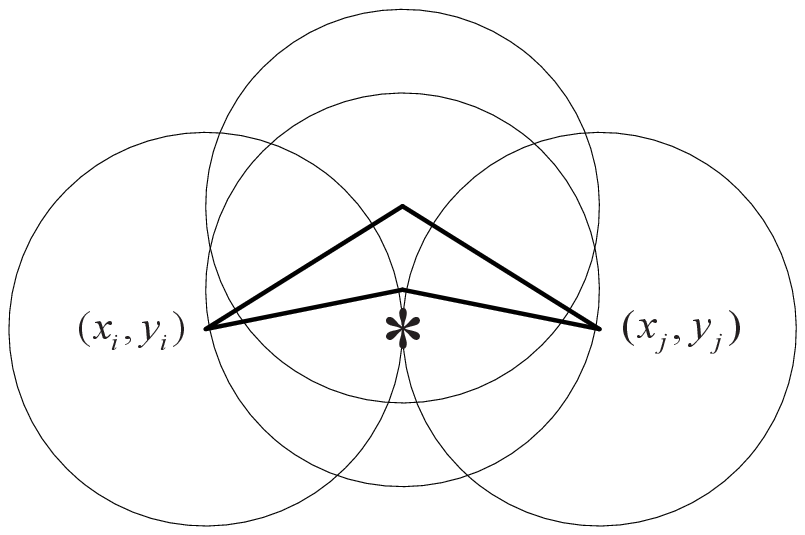}}
  \hspace{0.1in}
  \subfigure[]{
    \label{fig:subfig:d} 
    \includegraphics[width=0.3\textwidth]{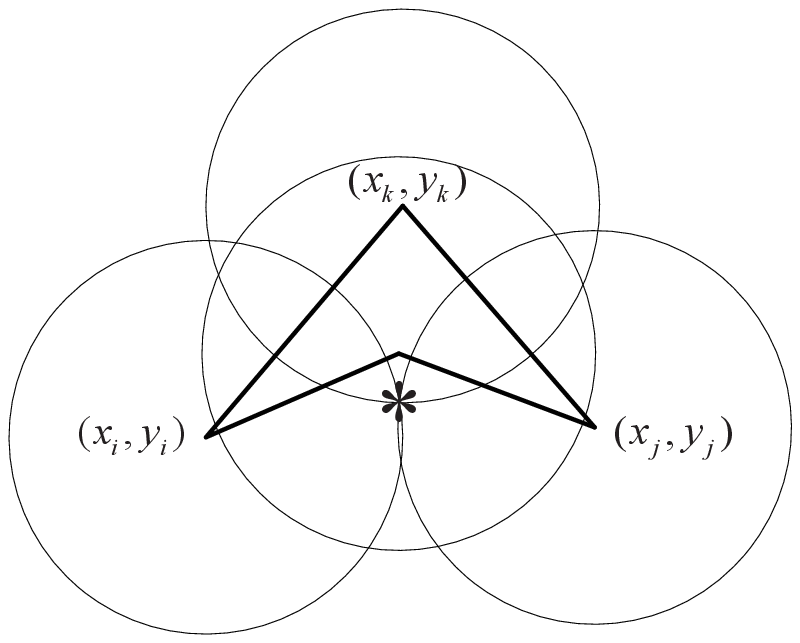}}
  \hspace{0.1in} \\
  \subfigure[]{
    \label{fig:subfig:a} 
    \includegraphics[width=0.3\textwidth]{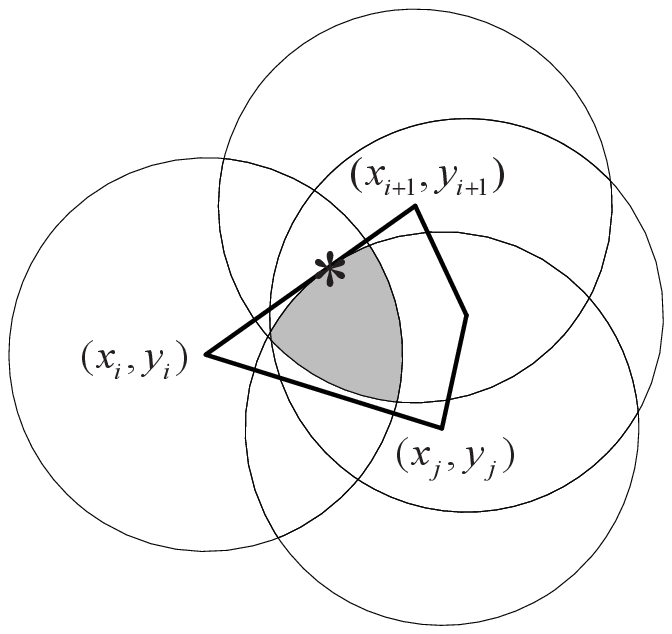}}
  \hspace{0.1in}
  \subfigure[]{
    \label{fig:subfig:b} 
    \includegraphics[width=0.3\textwidth]{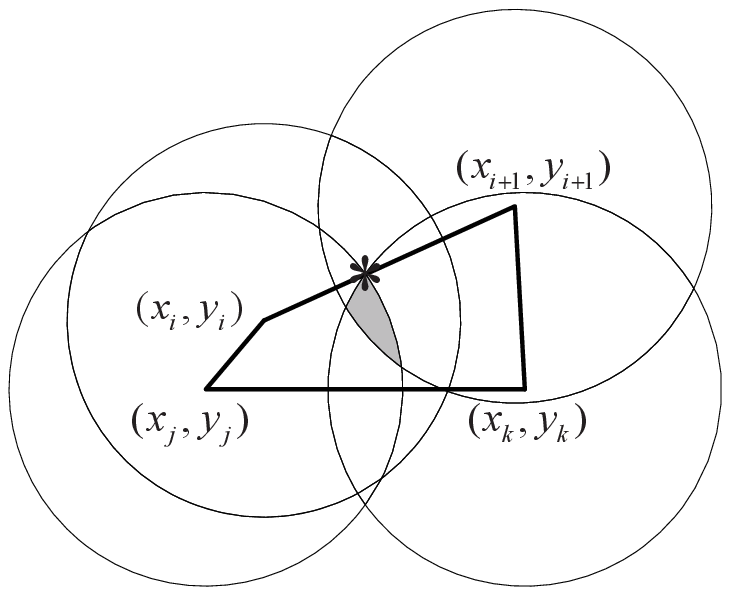}}
  \caption{The optimal position of the mobile base station outside or on the boundary of the polygon.}
  \label{geometry} 
\end{figure}

Similarly, let $r$ increases and at the first time the intersection area of the $M$ disks  centered at $(x_1,y_1),\dots,(x_M,y_M)$ has one point outside or on the boundary of the polygon, the point is the optimal position of the mobile base station. There are only four possible cases which are illustrated in Fig. \ref{geometry}.
The intersection area is the shadowed  part and the optimal position of the mobile base station is marked by a star.  In Fig. \ref{fig:subfig:c} and Fig. \ref{fig:subfig:d}, the  optimal position of the mobile  base station is outside the polygon. In Fig. \ref{fig:subfig:a} and Fig. \ref{fig:subfig:b}, the optimal position of the mobile base station is on the boundary  of the polygon. We define $\mathcal{E}_{i}$ as the edge  of the polygon  connecting two adjacent vertices $(x_i,y_i)$ and $(x_{i+1},y_{i+1})$ for $i\in\{1,\dots,M\}$ where  $(x_{M+1},y_{M+1})=(x_{1},y_{1})$.

In Fig. \ref{fig:subfig:c}, the optimal position of the mobile  base station $(x,y)$ is the sole intersection point of two circles $\mathcal{C}_i$ and $\mathcal{C}_j$, and outside the polygon.  For two non-adjacent indexes $i, j\in\{1,\dots, M\}$, the coordinate of the mobile base station is described by  (\ref{c1}).

In Fig. \ref{fig:subfig:d}, the optimal position of the mobile base station $(x,y)$ is the sole intersection point of three circles $\mathcal{C}_i$, $\mathcal{C}_j$ and $\mathcal{C}_k$, and outside the polygon.  For three different indexes $i, j, k\in\{1,\dots, M\}$, the coordinate of the mobile base station is  described by (\ref{c3}).

In Fig. \ref{fig:subfig:a}, the optimal position of the mobile base station $(x,y)$ is the intersection point of one circle $\mathcal{C}_j$ and one edge $\mathcal{E}_{i}$. The line segment connecting $(x_j, y_j)$ and $(x,y)$ is orthogonal to  $\mathcal{E}_{i}$ and $(x,y)$ is on $\mathcal{E}_{i}$. For three different indexes
$i, i+1, j\in\{1,\dots, M\}$, the coordinate of the mobile base station is:
\begin{subequations}\label{25}
\begin{alignat}{3}
x&=\frac{x_j(x_{i+1}-x_i)^2+(y_{i+1}-y_i)(x_iy_{i+1}+x_{i+1}y_j-x_iy_j-x_{i+1}y_i)}{(x_{i+1}-x_i)^2+(y_{i+1}-y_i)^2}\label{25a}\\
y&=\frac{y_j(y_{i+1}-y_i)^2+(x_{i+1}-x_i)(x_{i+1}y_i+x_jy_{i+1}-x_jy_i-x_iy_{i+1})}{{(x_{i+1}-x_i)^2+(y_{i+1}-y_i)^2}}.\label{25b}
\end{alignat}
\end{subequations}

In Fig. \ref{fig:subfig:b}, the optimal position of the mobile base station $(x,y)$ is the intersection point of two circles $\mathcal{C}_j$ and $\mathcal{C}_k$ and one edge $\mathcal{E}_i$. The distance from $(x,y)$ to $(x_j, y_j)$ and $(x_k, y_k)$ are the same and $(x,y)$ is on $\mathcal{E}_i$. For four indexes   $i, i+1, j, k\in\{1,\dots, M\}$ where  $j\neq k$, the coordinate of the mobile base station is:
\begin{subequations}\label{26}
\begin{alignat}{3}
x&=\frac{(x_{i+1}-x_i)(x_k^2-x_j^2+y_k^2-y_j^2)+2(y_k-y_j)(x_iy_{i+1}-x_{i+1}y_i)}{2(y_k-y_j)(y_{i+1}-y_i)+2(x_k-x_j)(x_{i+1}-x_i)}\label{26a}\\
y&=\frac{(y_{i+1}-y_i)(x_k^2-x_j^2+y_k^2-y_j^2)+2(x_k-x_j)(x_{i+1}y_i-x_iy_{i+1})}{2(y_k-y_j)(y_{i+1}-y_i)+2(x_k-x_j)(x_{i+1}-x_i)}.\label{26b}
\end{alignat}
\end{subequations}

The candidate position with the minimum maximum distance to all vertices is the optimal position of the mobile base station.

\begin{algorithm}[]
\caption{Computing the optimal position of the mobile base station with place restrictions}
\label{Algorithm3}
\begin{algorithmic}[1]
   \State \textbf{Input: }$(x_i,y_i)$ for $i\in\{1,\dots,M\}$.
\State \textbf{Output: }$(x,y)$.
   \State Find the candidate positions using (\ref{c1}), (\ref{c3}) and delete the ones inside the polygon.
   \State Find the candidate positions using (\ref{25}), (\ref{26}) and delete the ones outside the polygon.
   \State Compute the maximum distance to all vertices for each candidate position using (\ref{c4}).
\State \textbf{Return} $(x,y)$ which is the coordinate of the candidate position with the minimum maximum distance to all vertices.
\end{algorithmic}
\end{algorithm}

The numerical method computes the optimal  position of the mobile base station outside or on the boundary of the polygon as Algorithm \ref{Algorithm3}. The complexity is the comparison of no more than $\frac{2}{3}M^3+\frac{1}{2}M^2-\frac{19}{6}M$ candidate positions satisfying (\ref{c1}), (\ref{c3}), (\ref{25}) or (\ref{26}) where $M\geq 3$.

\subsection{Relay-Assisted Coverage}
The optimal position of the mobile base station maximizes the minimum SNR of any point over the entire region. But due to the end-to-end data rate requirement, part  of the polygon  may be beyond the coverage of the mobile base station.  When the destination locates at the area beyond the coverage of the mobile base station, the minimum number of relays are deployed along the line segment connecting the mobile base station and  the destination.

Using Algorithm \ref{Algorithm0}, we first compute the optimal positions of the relays  for maximizing the reach to the destination  as the number of  relays increases. Once the maximum reach is larger than or equal to the distance between the mobile base station and the destination, the number of  relays is the minimum one and the relays are deployed at the optimal positions.

We caution that, with relays, the goal here is limited to reaching any point in the region, rather than covering the entire region at the same time.
\begin{figure}
\centering
\includegraphics[width=0.35\textwidth]{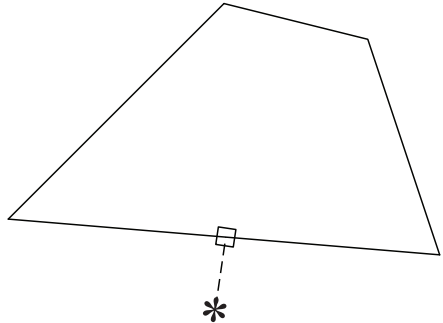}
\caption{The optimal position of the mobile base station to cover the convex polygon. }
\label{fig_convex}
\end{figure}
\subsection{Discussions}
For the nonconvex polygon, the optimal position of the mobile base station could be inside, outside or on the boundary of the polygon.  But the convex polygon  satisfies  \textit{Proposition \ref{proposition:6}}.
\begin{proposition}
\label{proposition:6}
For the convex polygon, the optimal position of the mobile base station can not be outside the polygon.
\end{proposition}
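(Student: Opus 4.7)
The plan is to argue by contradiction, exploiting the standard closest-point projection onto a closed convex set. By \textit{Proposition \ref{proposition:1}}, the optimal position of the mobile base station is the center of the minimum disk enclosing the $M$ vertices $(x_1,y_1),\dots,(x_M,y_M)$, equivalently the minimizer of $r(x,y)=\max_{i}\sqrt{(x-x_i)^2+(y-y_i)^2}$ over $\mathbb{R}^2$. Denote by $P$ the convex polygon (a closed convex set containing all its vertices).

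Suppose, for contradiction, that some optimal position $(x^*,y^*)$ lies strictly outside $P$. Since $P$ is closed and convex, there exists a unique orthogonal projection $(x',y')\in P$ of $(x^*,y^*)$ onto $P$, and $(x',y')\neq(x^*,y^*)$. The defining variational inequality of the projection states that for every $p\in P$,
\begin{equation}
\bigl\langle p-(x',y'),\,(x^*,y^*)-(x',y')\bigr\rangle\le 0.
\end{equation}
Expanding $\|p-(x^*,y^*)\|^2=\|p-(x',y')\|^2+2\langle p-(x',y'),(x',y')-(x^*,y^*)\rangle+\|(x',y')-(x^*,y^*)\|^2$ and using the inequality above gives
\begin{equation}
\|p-(x^*,y^*)\|^2 \ge \|p-(x',y')\|^2 + \|(x',y')-(x^*,y^*)\|^2.
\end{equation}
Since $(x',y')\neq(x^*,y^*)$, the final term is strictly positive, so $\|p-(x^*,y^*)\|>\|p-(x',y')\|$ for every $p\in P$.

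Applying this to each vertex $(x_i,y_i)\in P$ and taking the maximum over $i$, we get $r(x',y')<r(x^*,y^*)$, contradicting the optimality of $(x^*,y^*)$. Hence no optimal position can lie strictly outside $P$; equivalently, the optimal mobile base station location must lie in $P$ (its interior or boundary). The main subtlety is justifying the strict inequality uniformly over all vertices, which is precisely what the projection inequality delivers once $(x^*,y^*)\notin P$; everything else is standard convex geometry, and convexity of $P$ is used exactly once, to guarantee the existence and variational characterization of the projection.
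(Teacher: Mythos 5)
Your proof is correct, and it follows the same underlying strategy as the paper---argue by contradiction and replace the exterior point with a strictly better one---but the mechanism is genuinely different. The paper picks an edge of the polygon whose supporting line separates the exterior point from the polygon, drops a perpendicular from the point to that line, and asserts (without computation) that the foot of the perpendicular is closer to every vertex; the strict decrease follows because all vertices lie in the closed half-plane on the far side of the separating line while the exterior point lies at positive distance on the near side. You instead project onto the polygon itself as a closed convex set and invoke the variational inequality $\langle p-(x',y'),(x^*,y^*)-(x',y')\rangle\le 0$, which yields the clean identity $\|p-(x^*,y^*)\|^2\ge\|p-(x',y')\|^2+\|(x',y')-(x^*,y^*)\|^2$ and hence a uniform strict decrease over all vertices. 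Your route is more rigorous on exactly the point the paper glosses over: it does not require identifying a separating edge (and sidesteps the minor issue that the perpendicular foot in the paper's construction may land on the extension of the edge rather than the edge segment itself), and the strictness of the inequality is derived rather than asserted. The paper's argument is more elementary and visual, needing only the separating-edge picture; yours generalizes immediately to any closed convex region and to any finite or infinite set of covered points. Both correctly conclude that the minimizer of the max-distance objective cannot lie strictly outside the convex polygon.
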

\begin{proof}
If the optimal position of the mobile base station is outside the convex polygon  as  marked by a star in Fig. \ref{fig_convex},  there exists at least one edge  of the polygon,  such that
the star and the polygon  locate at the different sides of the edge.  We draw an orthogonal line from the star to the edge  and mark the intersection point by a square. The maximum distance to all vertices from the square is smaller than that from the star, so that the square is a better position of the  base station which contradicts the assumption.  Therefore,  the optimal position of the mobile base station can not be outside the convex polygon.
\end{proof}

Therefore, for the convex polygon without place restrictions, the optimal position of the mobile base station is inside or on the boundary of the polygon.
For the convex polygon with place restrictions, the optimal position of the mobile base station is  on the boundary of the polygon.

\section{Numerical Results}\label{section6}

The simulation parameters chosen according to LTE standards \cite{ref10} are listed in TABLE \ref{parameters}.

\begin{table}
\centering
\caption{Main Simulation Parameters}
\label{parameters}
\renewcommand{\arraystretch}{1.0}
\begin{tabular}{p{0.23\textwidth}|p{0.26\textwidth}}
\hline
Parameter&Value\\
\hline
Transmit power of the base station&$P$ dBm\\
\hline
Transmit power of the relays&$P-3$ dBm\\
\hline
Transmit power of the destination&$P-6$ dBm\\
\hline
Total bandwidth $W$&$9$ MHz\\
\hline
Data rate requirement $B,C$&$2$ Mbps\\
\hline
SNR threshold $\gamma,\tau$&$20$ dB\\
\hline
Path loss $Ad^{-\alpha}$&$-15.3-37.6\log_{10}(d)$ dB, $d$ in meters\\
\hline
Power spectral density of Gaussian noise $\sigma^2$&$-174$ dBm/Hz\\
\hline
\end{tabular}
\end{table}

\subsection{The Deployment of Relays}
In the case of the deployment of  relays,  the optimal distance tuple is computed which maximizes the reach to the destination subject to the end-to-end data rate requirement.

The optimal distance tuple as the   transmit power increases with one relay under time division and frequency division are shown in Fig. \ref{dvsp}. With fixed   transmit power, $d_1$ is larger than $d_2$ since the   transmit power of the base station is larger than that of the destination. As discussed in Section \ref{datarateB}, when the   transmit power of each device is the constraint, the end-to-end data rate under time division is smaller than that under frequency division. Then $d_1+d_2$ under time division is smaller than that under frequency division. With the optimal distance tuple, the optimal positions of the relays  are straightforward. For example, when $P=20$ dBm, the maximum reach to  the destination is $341$ meters under time division and $493$ meters under frequency division. The optimal position of the relay is $192$ meters  from the  base station under time division and $277$ meters  from the  base station under frequency division.

\begin{figure}[]
\centering
\includegraphics[width=0.6\textwidth]{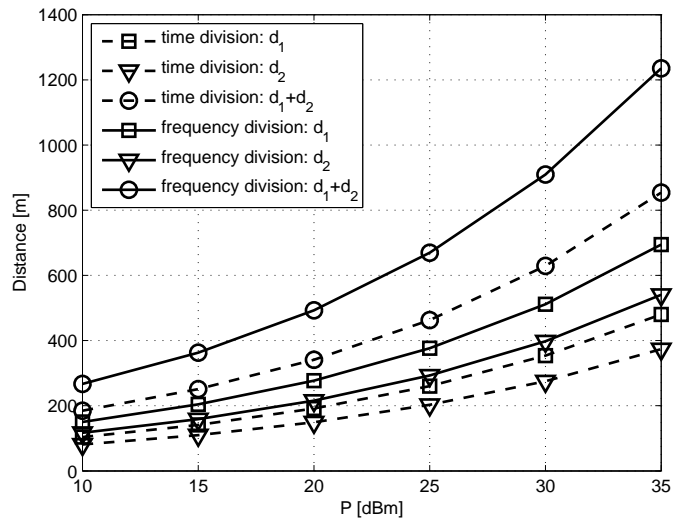}
\caption{The optimal distance tuple as the   transmit power increases with one relay under time division and frequency division.}
\label{dvsp}
\end{figure}

\begin{figure}[]
\centering
\includegraphics[width=0.6\textwidth]{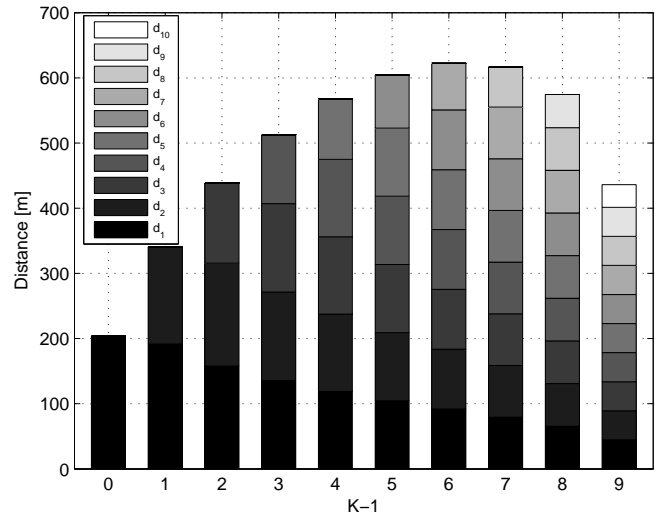}
\caption{The optimal distance tuple as the number of  relays increases under time division when $P=20$ dBm.}
\label{dvsntdma}
\end{figure}

\begin{figure}[]
\centering
\includegraphics[width=0.6\textwidth]{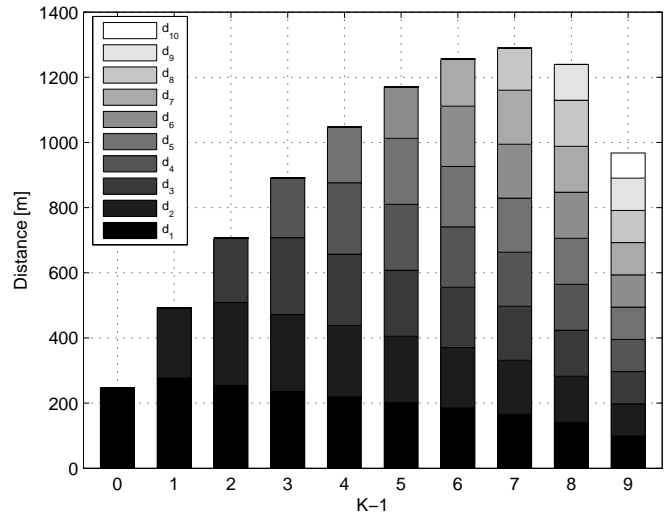}
\caption{The optimal distance tuple  as the number of  relays increases under frequency division when $P=20$ dBm.}
\label{dvsnfdma}
\end{figure}

As \textit{Proposition \ref{proposition:0}} shows, substituting the simulation parameters into (\ref{00000}),  the number of  relays $K-1$ can not exceed $9$.
The optimal distance tuple  as the number of  relays increases  under time division and frequency division when $P=20$ dBm are shown in Fig. \ref{dvsntdma} and Fig. \ref{dvsnfdma}. When the number of  relays  is small, the maximum reach to  the destination increases with the  number of relays. Beyond $6$ under time division and $7$ under frequency division,  deploying more relays not only does not provide further improvement, but decreases the maximum reach. Even with redundant relays, the best choice is to deploy $6$ relays to extend the  maximum reach to  $622$ meters under time division, and $7$ relays to extend the maximum reach to $1289$ meters under frequency division.


\subsection{The Deployment of the Mobile Base Station}
In the case of the deployment of the mobile base station, the polygon is  a quadrilateral whose four vertices are $(0,350)$, $(300,650)$, $(500,600)$, and $(600,300)$ in meters. The optimal position of the mobile base station is computed, which maximizes the minimum SNR of any point over the entire region.
\begin{figure}[]
\centering
\includegraphics[width=0.6\textwidth]{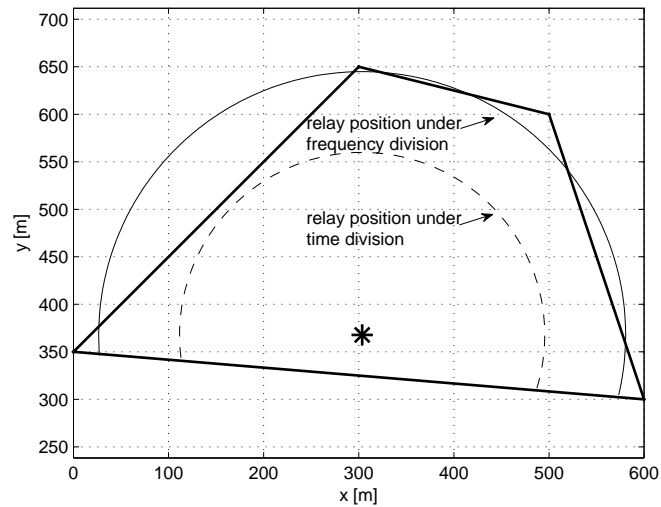}
\caption{The optimal positions of the  mobile base station without place restrictions  and  the relay under time division and frequency division when $P=20$ dBm.}
\label{journalcovernew}
\end{figure}

When the mobile base station could be deployed anywhere, the optimal position of the mobile base station is $(368,304)$ in meters as marked by a star in Fig. \ref{journalcovernew}. The distance from the mobile base station to the vertices  of the polygon  are $304$, $282$, $304$ and $304$ meters.  As shown in Fig. \ref{dvsntdma} and Fig. \ref{dvsnfdma}, with the same end-to-end data rate requirement, the coverage of the mobile base station is $204$ meters under time division and $245$ meters under frequency division when $P=20\text{ dBm}$.
When the destination locates at the area beyond the coverage of the mobile base station, one relay is deployed along the line segment connecting  the mobile base station and the destination to  extend the coverage to $341$ meters under time division, and $493$ meters under frequency division. The relay is on the dash chord which is $192$ meters  from  the mobile  base station under time division, and on the solid chord which is $277$ meters from the mobile base station under frequency division.

\begin{figure}[]
\centering
\includegraphics[width=0.55\textwidth]{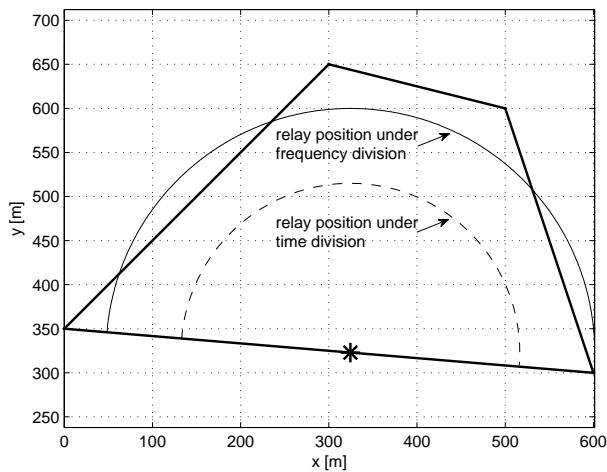}
\caption{The optimal positions of the mobile base station outside or on the boundary of the polygon  and the relay under time division and frequency division when $P=20$ dBm.}
\label{journalcover}
\end{figure}

When the mobile  base station shall be deployed outside or on the boundary of the polygon, the optimal position of the mobile base station is $(325,323)$ in meters as marked by a star in Fig. \ref{journalcover}. The distance from the mobile base station to the vertices  of the polygon  are $326$, $328$, $328$ and $276$ meters.  Similarly,  when the destination locates at the area beyond the coverage of the mobile base station, one relay is deployed on the dash chord which is $192$ meters  from  the mobile base station under time division, and on the solid chord which is $277$ meters from the mobile base station under frequency division.

\section{Conclusion}\label{section7}
In this paper, we study the optimal positions of multiple relays and a mobile base station for maximizing the reach of a wireless network under time division and frequency division. When the number of relays is small, the maximum reach increases with the number of relays. Beyond a certain number, deploying more relays does not provide further improvement due to the end-to-end data rate requirement. The optimal position of the mobile base station maximizes the minimum SNR of any point in a polygon. Due to the end-to-end data rate requirement, relays may be deployed to reach any point in the polygon. Maximizing the reach of the wireless network  using spectrum sharing scheme and providing full coverage to a given area are left for future work, where interference and scheduling are crucial to the wireless coverage.

\end{document}